\title{$K$-Best Solutions of MSO Problems on Tree-Decomposable Graphs}
\author[1]{David Eppstein}
\author[2]{Denis Kurz}
\affil[1]{Computer Science Department, UC Irvine, USA\\
  \texttt{eppstein@uci.edu}}
\affil[2]{Department of Computer Science, TU Dortmund, Germany\\
  \texttt{denis.kurz@tu-dortmund.de}}
\date{}
\DeclareMathOperator\compose{\circ}
\DeclareMathOperator{\card}{\textbf{Card}}
\DeclareMathOperator{\mintwo}{\text{min}_2}
\DeclareMathOperator{\mink}{\text{min}_k}
\newcommand{\lab}{\textit{lab}}
\newcommand{\verts}{\textit{vert}}
\newcommand{\src}{\textit{src}}
\newcommand{\edg}{\textit{edg}}
\newcommand{\typ}{\textit{type}}
\newcommand{\sat}{\textit{\textbf{sat}}}
\newcommand{\bag}{\textit{bag}}
\newcommand{\emptyvec}{\mbox{\boldmath$\emptyset$}}
\keywords{graph algorithm, $k$-best, monadic second-order logic, treewidth}
\subjclass{G.2.2 Graph Theory}
\begin{document}

\maketitle

\begin{abstract}
We show that, for any graph optimization problem in which the feasible solutions can be expressed by a formula in monadic second-order logic describing sets of vertices or edges and in which the goal is to minimize the sum of the weights in the selected sets, we can find the $k$ best solutions for $n$-vertex graphs of bounded treewidth in time $\mathcal O(n+k\log n)$. In particular, this applies to the problem of finding the $k$ shortest simple paths between given vertices in directed graphs of bounded treewidth, giving an exponential speedup in the per-path cost over previous algorithms.
\end{abstract}

\section{Introduction}

Finding multiple alternative routes between a given pair of vertices in a network, formalized as the \emph{$k$ shortest paths problem}, has a huge number of applications including biological sequence alignment, metabolic pathway reconstruction, hypothesis generation in natural language processing, computer network routing, and vehicle routing~\cite{DBLP:journals/eatcs/Eppstein15}.
When the input network is a directed acyclic graph, it can be solved in constant time per path, after a preprocessing stage whose time is bounded by the time for a single-source shortest path problem~\cite{DBLP:journals/siamcomp/Eppstein98}.
However, for graphs with cycles, this algorithm may find paths with repeated vertices, which are undesirable in many applications. A variant of the problem that disallows these repetitions, the \emph{$k$ shortest simple paths problem}, has also been studied, but for the past 45 years there have been no asymptotic improvements to an algorithm of Yen, which takes quadratic time per path~\cite{10.2307/2629312}. Indeed, a reduction of Vassilevska Williams and Williams suggests that, at least for the case $k=2$, no substantial improvement to this algorithm is possible for arbitrary directed graphs~\cite{DBLP:conf/focs/WilliamsW10}. Even for the easier case of $k$ shortest simple paths in an undirected graph, known algorithms take the time of a single-source shortest path computation per path~\cite{DBLP:journals/networks/KatohIM82}, far from the constant time per path that can be attained when paths are not required to be simple.

This situation suggests studying classes of graphs for which faster $k$-best optimization algorithms are possible. In this paper we provide a first result of this type, showing that the $k$ shortest simple paths can be found in logarithmic time per path (an exponential speedup over the per-path time of previous algorithms) for the graphs of bounded treewidth. Our results are based on  general algorithmic metatheorems that have been developed for these graphs, and in particular on \emph{Courcelle's theorem}, according to which a wide class of decision and optimization problems expressible in the monadic second-order logic of graphs (MSO) can be solved in linear time on graphs of bounded treewidth. MSO is a form of logic in which the variables of a formula represent vertices, edges, sets of vertices, and sets of edges of a graph, one can test set membership and vertex--edge incidence, and variables can be existentially or universally quantified. For instance the property that an edge set $P$ represents a simple path from $s$ to $t$ can be expressed in MSO as a formula
\[
\begin{split}
(\forall v,e,f,g)\Bigl[& \bigl( I(v,e)\wedge e\in P \wedge I(v,f)\wedge f\in P  \wedge I(v,g)\wedge g\in P  \bigr) \Rightarrow \bigl(e=f \vee e=g \vee f=g\bigr)\Bigr]\\
{}\wedge (\exists e)\biggl[ &I(s,e)\wedge e\in P \wedge (\forall f)\Bigl[\bigl(I(s,f)\wedge f\in P)\Rightarrow e=f\bigr)\Bigr]\biggr]\\
{}\wedge (\exists e)\biggl[ &I(t,e)\wedge e\in P \wedge (\forall f)\Bigl[\bigl(I(t,f)\wedge f\in P)\Rightarrow e=f\bigr)\Bigr]\biggr]\\
{}\wedge (\forall S) \Bigl[&
\lnot(\exists v,e) \bigl[ v\in S\wedge e\in P \wedge I(v,e) \bigr] \vee
\lnot(\exists v,e) \bigl[ \lnot(v\in S) \wedge e\in P \wedge I(v,e) \bigr] \vee{}\\
&(\exists v,w,e) \bigl[ v\in S \wedge\lnot(w\in S) \wedge I(v,e) \wedge I(w,e) \bigr]
\Bigr]
\end{split}
\]
(where $s$, $t$, $v$ and $w$ are vertex variables, $e$, $f$, and $g$ are edge variables, $S$ is a vertex-set variable, and $I$ is the vertex--edge incidence predicate).
This formula expresses the constraints that each vertex is incident to at most two edges of $P$,
$s$ and $t$ are each incident to exactly one edge of $P$, and every partition of the vertices that is not crossed by $P$ has $P$ only on one of its two sides.

Courcelle's theorem provides a translation from MSO formulas to tree automata, allowing graphs of bounded treewidth that satisfy the formula to be recognized in linear time by a bottom-up dynamic programming algorithm that executes the tree automaton on a tree-decomposition of the given graph. Arnborg, Lagergren and Seese~\cite{DBLP:journals/jal/ArnborgLS91} and Courcelle and Mosbah~\cite{DBLP:journals/tcs/CourcelleM93} gave extensions of this method that also solve optimization problems for MSO predicates (formulas with one unbound set-variable) that seek the minimum weight vertex set or edge set obeying the given predicate, on a weighted graph of bounded treewidth. For instance, it could be used to find shortest simple paths on bounded-treewidth graphs with negative edges and negative cycles, a problem that is NP-hard on arbitrary graphs.

In this paper, we show that, for any MSO predicate, the $k$ minimum-weight sets satisfying the predicate can be found on graphs of bounded treewidth in logarithmic time per set. In particular, using the formula given above, we can find the $k$ shortest simple paths in logarithmic time per path. Other previously-studied graph optimization problems to which our method applies (and provides an exponential per-solution speedup) include finding the $k$ smallest spanning trees~\cite{DBLP:journals/jacm/EppsteinGIN97}, the $k$ best matchings~\cite{DBLP:journals/dam/ChegireddyH87}, and (with a doubly exponential per-solution speedup) the $k$ best solutions to the traveling salesperson problem~\cite{DBLP:journals/cor/PoortLSV99}. Although the example formula above describes simple paths in undirected graphs, our method applies equally well to directed graphs as long as the undirected graph obtained by forgetting the edge orientations has bounded treewidth.

Our method uses a special tree-decomposition, one that continues to have bounded width but also has logarithmic depth and bounded degree.
We translate the dynamic programming algorithm for finding the minimum weight set satisfying a given MSO predicate into a \emph{fully persistent dynamic graph algorithm} for the same optimization problem, one that can report the minimum weight solution after modifying the given graph by changing the weights of some of its edges or vertices. We apply this method to find the \emph{second-best} (rather than best) solution, and to detect a feature of the graph (a vertex or edge) at which the best and second-best solution differ. By branching on this feature (using the dynamic graph algorithm to delete it in one subproblem and force it to be included in another) we can recursively decompose the original problem into a hierarchy of subproblems whose second-best solutions (together with the global best solution) include all of the $k$ best solutions to the input problem. To find the $k$ best solutions, we perform a best-first search of this hierarchy.

\section{Preliminaries}

In this section, we establish most of our notation and definitions, and review algorithmic components from previous research that we will use to establish our results.

We denote by $[i]$ the set $\{1, \ldots, i\}$ for $i \in \mathbb N$.
The cardinality of a set $M$ is denoted by $|M|$, its power set by $2^M$.
For a function $f: M \to N$, we write $f(M') = \{ f(a) \mid a \in M' \}$ for $M' \subseteq M$.
For an $n$-element sequence $S$ and $i \in [n]$, we denote the $i$-th element of $S$ by $S_i$.

Our algorithms assume a RAM model of computation in which addition and comparison operations on input weights or sums of weights can be performed in constant time per operation.

\subsection{Binary heap of subproblems}
\label{sec:subproblems}

We adopt the following technique for finding the $k$ best solutions to a combinatorial optimization problem. The technique was first used by Gabow for finding the $k$ smallest spanning trees~\cite{DBLP:journals/siamcomp/Gabow77},
and is surveyed by Hamacher and Queyranne~\cite{MR948016}
and Eppstein~\cite{DBLP:journals/eatcs/Eppstein15}.

We assume that the problem to be solved has solutions that can be represented as a set of edges or vertices in a given weighted graph $G$, and that the goal is to minimize the total weight of a solution. We may define a \emph{subproblem} of this problem by forcing certain edges or vertices to be included in the solution and preventing certain other edges or vertices from being included; alternatively, these constraints can be simulated without changing the graph structure by changing the weights of the forced edges or vertices to a large negative number and by changing the weights of the excluded edges or vertices to a large positive number.
We say that a subproblem is \emph{feasible} if there exists a solution to the given problem that is consistent with its constraints.
If $S$ is any subproblem, then we may consider two solutions to $S$, its \emph{best solution} (the one with the minimum weight, subject to the constraints of $S$) and its \emph{second-best solution} (the one that differs from the best solution and otherwise has the minimum possible weight), with ties broken in any consistent way. We say that an edge or vertex is a \emph{pivot feature} if it is present in the best solution but absent in the second-best solution, or vice versa. If a subproblem has only a single solution, then we say that it is \emph{uniquely solvable}.

We then form a binary tree of feasible subproblems, as follows. The root of the tree is the subproblem with no constraints (the one whose solutions are all solutions to the given problem on the whole graph~$G$).
Then, for each subproblem $S$ in the tree that is not uniquely solvable, the two children of $S$ are determined by choosing (arbitrarily) a pivot feature of $S$, constraining that pivot feature to be included in the solutions for one child, and constraining the same pivot feature to be excluded from the solutions for the other child. These two children of $S$ have solution sets that partition the solutions of $S$ into two nonempty subsets, one containing the best solution and the other containing the second-best solution.
A uniquely solvable subproblem in this tree of subproblems forms a leaf, with no children.

Each solution of the given problem on $G$ appears as the second-best solution of exactly one subproblem in this binary tree of subproblems, except for the global best solution which is never the second-best solution of any subproblem. The tree is ordered as a binary min-heap according to the values of the second-best solutions at each subproblem of the tree: the second-best solution of any subproblem is always better than the second-best solutions of its two children. Therefore, the $k$-best solutions of the original problem can be found by outputting the best solution and then performing a best-first search in the tree of subproblems to find the $k-1$ subproblems whose second-best solutions have the smallest values. This best-first search can be performed by evaluating $\mathcal O(k)$ subproblems (finding their best solutions, second-best solutions, and a pivot feature) after using a priority queue of subproblems (prioritized by their second-best solution values) to select each successive subproblem to evaluate in $\mathcal O(\log k)$ time per subproblem. Alternatively, a more complex heap-selection algorithm of Frederickson~\cite{DBLP:journals/iandc/Frederickson93} can be used to find the $k-1$ best subproblems in the tree using (again) $\mathcal O(k)$ evaluations of subproblems but only $\mathcal O(1)$ overhead per subproblem to select each successive subproblem to evaluate.

Using this method, the main remaining task is to show how to find the second-best solution values and pivot features of each subproblem in this tree of subproblems, as efficiently as possible.

\subsection{Path-copying persistence}
\label{sec:persistence}

We will develop a data structure that allows us to add a new constraint to a subproblem, forming one of its child subproblems, and efficiently compute the new second-best solution value of the new child subproblem. However, without additional techniques such a data structure would allow us to follow only a single branch of the tree of subproblems. We use ideas from \emph{persistent data structures}, following Sarnak et al.~\cite{DBLP:journals/jcss/DriscollSST89}, to extend these data structures to ones that let us explore multiple branches of the tree of subproblems concurrently.

An \emph{ephemeral data structure} is one that has only a single version, which is changed by certain \emph{update} operations and accessed but not changed by additional \emph{query} operations. In the corresponding \emph{fully persistent data structure}, each operation takes an additional argument, the \emph{version} of the data structure, and operates on that version. Persistent queries return the result of the query on that version, and do not change it.
Persistent updates create and return a new version of the data structure, in which the given change has been made to the version given as an argument. The previous version is left intact, so that future updates and queries can still be made to it.

\emph{Path copying} is a general technique introduced by Sarnak et al. for converting any tree-based ephemeral data structure (such as a binary heap or binary search tree) into a fully persistent structure for the same problem. In order for it to be applicable, the underlying ephemeral data structure must consist of a tree of nodes, with each node pointing to its children in the tree (along with, possibly other information used as part of updates and queries), but without pointers to parents or other non-child nodes. Every operation in the ephemeral data structure should be performed by starting at the tree root, following child pointers to find additional nodes reachable by paths from the root, and then (in case of a query) collecting information from those nodes or (in case of an update) changing or replacing some or all of the reached nodes.

To make such a data structure fully persistent, we represent each version of the data structure by the root node of its tree.  Persistent queries are handled by exactly the same algorithm as ephemeral queries, starting from the root node representing the desired version of the data structure.
When an ephemeral update would change or replace some subset of the nodes reached during the update, the persistent structure instead creates new nodes for all of these changed or replaced nodes and all of their ancestors, without making any changes to the existing nodes.
In this way, the space and time requirements of each persistent update are proportional to the time for an ephemeral update.

Path-copying persistence was already used in the $k$-shortest paths algorithm of Eppstein~\cite{DBLP:journals/siamcomp/Eppstein98}, as part of the construction of certain persistent heap structures used to represent sets of \emph{detours} in the given graph. Here, we apply the same technique in a different way, to make an ephemeral data structure for second-best solutions persistent. We will associate a version of the second-best solution data structure with each subproblem in the binary tree of subproblems described in the previous section.
Then, when we expand a subproblem (finding its pivot feature and using that feature to define two new child subproblems) the data structure versions of the two child subproblems can be found by applying two different persistent updates to the version of their parent.

\subsection{Shallow tree decompositions}

The data structure to which we will apply the path-copying persistence technique will be based on a tree decomposition of the given graph. However, in order to make the path-copying efficient, we need to use a special kind of tree decomposition, one with low depth.

A \emph{tree decomposition} of a graph $G=(V,E)$ is a pair $(T = (U, F), \bag)$, where $T$ is a tree, $\bag: U \to 2^V$ maps tree nodes to subsets of $V$, and the following conditions are met:
\begin{itemize}
    \item $V = \bigcup \bag(U)$,
    \item for each $e = (v, w) \in E$, there is some $u \in U$ with $\{v, w\} \subseteq \bag(u)$, and
    \item for each $u, u', u''$, if $u'$ is on the path from $u$ to $u''$ in $T$, we have $\bag(u) \cap \bag(u'') \subseteq \bag(u')$.
\end{itemize}
The \emph{width} of a tree decomposition is one less than the size $\max \{ |\bag(u)| \mid u \in U \}$ of its largest bag.
The \emph{treewidth} of a graph $G$ is the smallest $w$ such that $G$ has a tree decomposition of width~$w$.
For any fixed $w$, one can recognize the graphs of treewidth at most $w$ and compute a tree decomposition of optimal width for these graphs, in linear time, using an algorithm of Bodlaender~\cite{DBLP:journals/siamcomp/Bodlaender96}.

We define the \emph{depth} of a tree decomposition to be the longest distance from a leaf to a root node chosen to minimize this distance.
A \emph{shallow tree decomposition} of a graph $G$ of bounded treewidth~$w$ is a tree decomposition with width $\mathcal O(w)$ and depth $\mathcal O(\log |G|)$ whose tree is binary.
Bodlaender~\cite{DBLP:conf/wg/Bodlaender88} showed that a shallow tree decomposition always exists, and that it can be constructed by a PRAM with $\mathcal O(|G|^{3w + 4})$ processors in $\mathcal O(\log |G|)$ time.
The shallow tree decomposition algorithm of Bodlaender and Hagerup~\cite{DBLP:journals/siamcomp/BodlaenderH98} with $\mathcal O((\log |G|)^2)$ running time on an EREW PRAM can be simulated by a RAM in $\mathcal O(|G|)$ time.

\subsection{Hypergraph algebra}

We adopt much of the following notation from Courcelle and Mosbah~\cite{DBLP:journals/tcs/CourcelleM93}.

Let $A$ be a ranked alphabet consisting of \emph{edge labels}, and let $\tau: A \to \mathbb N$ map labels to their \emph{orders}. 
A \emph{hypergraph} $G = (V, E, \lab, \verts, \src)$ \emph{of order $r$} consists of a set of vertices $V$ and a set of hyperedges $E$, an edge labeling function $\lab: E \to A$, a function $\verts: E \to V^*$ that maps edges to node sequences, and a sequence $\src$ of $r$ source nodes.
The \emph{order} of a hyperedge $e \in E$ is the length $\left|\verts(e)\right|$ of its vertex sequence, and must match the order of its label: $\tau(\lab(e)) = \left|\verts(e)\right|$.
A graph of order $r$ is a hypergraph of order $r$ with $\tau(\lab(E)) = \{2\}$, so every hyperedge has order 2.
Hyperedges of a graph may be called \emph{edges}.

We define, for an edge label alphabet $A$, a hypergraph algebra with a possibly infinite set of hypergraph operators and the following finite set of constants.
The constant $\mathbf 0$ denotes the empty hypergraph of order 0.
The constant $\mathbf 1$ denotes the hypergraph of order 1 with a single source vertex and no hyperedges.
For each $a \in A$, the constant $\mathbf a$ denotes the hypergraph of order $\tau(a)$ with node set $\{v_1, \ldots, v_{\tau(a)}\}$, a single hyperedge $e$ with $\lab(e) = a$ and $\verts(e) = (v_i)_{i \in [\tau(a)]}$, and $\src = \verts(e)$.

Let $G$ be a hypergraph of order $r$ and let $G'$ be a hypergraph of order $r'$.
The hypergraph algebra has the following operators.
The $(r + r')$-order hypergraph $G \oplus_{r, r'} G'$ consists of the disjoint union of the vertex and edge sets of $G$ and $G'$, and the concatenation of their source sequences.
For each $i, j \in [r]$, $\theta_{i, j, r}(G)$ is the hypergraph of order $r$ obtained from $G$ by replacing every occurence of $\src_j$ with $\src_i$ in the source sequence of $G$ and in every vertex sequence of a hyperedge of $G$.
For a mapping $\alpha: [p] \to [r]$, $\sigma_{\alpha}(G)$ is the hypergraph of order $p$ obtained from $G$ by replacing its $r$-element source sequence $\src$ with the $p$-element sequence $\src'$, with $\src_i' = \src_{\alpha(i)}$.
We defined an infinite number of operators, and we are able to generate any hypergraph with them.
For each family of hypergraphs $L$ that only contains hypergraphs of bounded treewidth and bounded order, there is a finite subset of the above operators that generates a superset of $L$ \cite{DBLP:journals/jacm/ArnborgCPS93,Bauderon1987,DBLP:books/el/leeuwen90/Courcelle90}.
We denote by $\mathcal G_w$ a finite hypergraph algebra as above that generates all $r$-order hypergraphs over a fixed label set of treewidth at most $w$ and $r < R$ for some fixed but arbitrary $R$.

Let $A$ be an alphabet of edge labels, and let $G = (V, E, \lab, \verts, \src)$ be a hypergraph over $A$.
A formula in \emph{counting monadic second-order logic} (\emph{CMS formula}) is a formula in monadic second-order logic, extended by pradicates $\card_{m, p}(X)$ for $m, p \in \mathbb N$, with $X \models \card_{m, p}(X)$ iff $|X| \equiv m \mod p$, and by incidence predicates $\edg_a(e, v_1, \ldots, v_{\tau(a)})$ for $a \in A$, with $(G, e, v_1, \ldots, v_{\tau(a)}) \models \edg_a(e, v_1, \ldots, v_{\tau(a)})$ iff $\lab(e) = a$ and $\verts(e) = (v_1, \ldots, v_{\tau(a)})$.
The class $\Phi^{h, q}_{A, R}(\mathcal W)$ consists of all CMS formulas for hypergraphs of order at most $R$ over edge labels $A$ on variables $\mathcal W$ whose depth of nested quantification is at most $h$ and $p < q$ for all subformulas of the form $\card_{m, p}(X)$.
Since $h$, $q$, $A$ and $\mathcal W$ are fixed in most contexts, we use the short form $\Phi_R = \Phi^{h, k}_{A, R}(\mathcal W)$.
We require variable alphabets $\mathcal W$ that include constants for all the source nodes of the hypergraphs, i.e., $\{ \mathbf s_i \mid i \in [R] \} \subset \mathcal W$ and $(G, v) \models (v = \mathbf s_i)$ iff $v = \src_i$.
Every other variable $X$ in $\mathcal W$ is assumed to be either a vertex set variable, denoted as $\typ(X) = V$, or a hyperedge set variable, denoted as $\typ(X) = E$.
This can be enforced by only considering formulas that include subformulas $\forall x: x \in X \Rightarrow x \in V$ if X is supposed to model a set of vertices, and $\forall x: x \in X \Rightarrow x \in E$ otherwise.
Note that variables representing single elements can be emulated by a subformula $\exists x: \forall y: x \in X \land (y \in X \Rightarrow x = y)$.

Our algorithms work on parse trees of hypergraphs with respect to some fixed hypergraph algebra $\mathcal G_w$.
A \emph{parse tree} $T = (U, F)$ is a directed rooted tree, with edges being directed away from the root $r$.
Leaves of $T$ are associated with a constant of $\mathcal G_w$; inner nodes are associated with an operator of $\mathcal G_w$.
The hypergraph represented by $T$ is the hypergraph constant associated with $r$ if $r$ is a leaf.
Otherwise, $T$ represents the hypergraph obtained by applying the operator associated with $r$ to the hypergraphs represented by the parse subtrees rooted in the children of $r$.
For $u \in U$, the hypergraph represented by the parse subtree of $T$ rooted in $u$ is denoted by $G(u)$.
Sometimes it is convenient to have exactly two child nodes for each inner node of the parse tree.
We can think of $\sigma_\alpha$ operators as having a second operand that is always the empty hypergraph.
Likewise, $\theta_{i, j, r}$ operators can be viewed as having a second operand that is always the one-vertex hypergraph.
These modified operators can even be derived from the original ones.
Instead of $\sigma_\alpha$, we can use $\sigma_\alpha' = \sigma_\alpha \compose \oplus_{0, r}(\mathbf 0)$.
The composition $\theta_{i, j, r}' = \theta_{i, j, r} \compose \sigma_\alpha \compose \theta_{i, r + 1, r + 1} \compose \oplus_{1, r}(\mathbf 1)$ with $\sigma: [r] \to [r + 1]$, $\sigma(i) = i$, can replace $\theta_{i, j, r}$.
All operators of this derived algebra are binary.
We call a respective parse tree a \emph{full parse tree}.
The \emph{proper} child of a $\theta$ or $\sigma$ node is the one that does not represent the one-vertex or empty hypergraph, respectively.

Let $G$ be a hypergraph, and $T = (U, F)$ a parse tree of $G$.
We consider combinatorial problems on $G$ that can be characterized by a CMS formula $\varphi \in \Phi_R$.
Let $n$ be the number of free variables of $\varphi$, and let $X = (X_i)_{i \in [n]}$ be the free variables themselves.
For example, we need $n = 1$ edge set to describe a (simple) path, or $n = c - 1$ node sets to describe a $c$-coloring of a graph.
An \emph{assignment} maps every $X_i$ to a subset of $\typ(X_i)$.
A \emph{satisfying assignment} is an assignment $f$ such that $(f(X), G) \models \varphi$.
A \emph{solution} is the sequence $(f(X_i))_{i \in [n]}$ for some assignment $f$.
Two solutions $S$, $S'$, are considered distinct if there is a $j \in [n]$ with $S_j \ne S_j'$.
Note that for $S_1 \ne S_2$, $(S_1, S_2)$ is also distinct from $(S_2, S_1)$ regardless of $\varphi$, even if $\varphi$ is a formula on two free variables, and symmetric on these free variables.
For a parse tree node $u \in U$, we denote by $S(u)$ the solution that is obtained from $S$ by removing every graph feature from every set $S_i$ that is not present in $G(u)$.
A solution is \emph{feasible} if the corresponding assignment is satisfying.
We denote by $\sat(G, \varphi)$ the set of feasible solutions, i.e., $Y \in \sat(G, \varphi) \Leftrightarrow (Y, G) \models \varphi$.

Sets of solutions, and $\sat(G, \varphi)$ in particular, are sets of $n$-tuples of sets.
The disjoint union of two sets $M$ and $N$ (which is undefined if $M \cap N \ne \emptyset$) is written as $M \sqcup N$.
Let $X$, $Y$ be solutions, and  $A$, $B$ sets of solutions.
We say that $X$ and $Y$ \emph{interfere} if there are $i, j \in [n]$ such that $X_i \cap Y_j \ne \emptyset$.
By extension, $A$ and $B$ interfere if some $X \in A$ interferes with some $Y \in B$.
If $A$ and $B$ do not interfere, $A \uplus B$ denotes the set of all combinations of each $X \in A$ and $Y \in B$, i.e., $$A \uplus B = \{ (X_i \sqcup Y_i)_{i \in [n]} \mid X \in A, Y \in B \}.$$

The above operators are not defined for every combination of operands.
A \emph{semi-homomorphism} from $\langle \mathcal S, \uplus, \sqcup, \emptyvec, \emptyset \rangle$ to some evaluation structure $\langle \mathcal R, \oplus, \otimes, \mathbf 0, \mathbf 1 \rangle$ is a mapping $f: \mathcal S \to \mathcal R$ that acts like a homomorphism where applicable, i.e., $f(\emptyvec) = \mathbf 0$, $f(\emptyset) = \mathbf 1$, $f(A \uplus B) = f(A) \oplus f(B)$ if $A$ and $B$ do not interfere, and $f(A \sqcup B) = f(A) \otimes f(B)$ if $A \cap B = \emptyset$.

Values of solution sets are expressed in terms of \emph{evaluation structures}.
An evaluation structure is an algebra $\langle \mathcal R, \oplus, \otimes, \mathbf0, \mathbf1 \rangle$ such that $\langle \mathcal R, \oplus, \mathbf 0 \rangle$ and $\langle \mathcal R, \otimes, \mathbf1 \rangle$ are monoids.
An \emph{evaluation} $v$ is a function that maps hypergraphs generated by $\mathcal G_w$ to an (ordered) evaluation structure $\mathcal R$.
An evaluation $v$ is an \emph{MS-evaluation} if there exists a semi-homomorphism $h$ and a CMS formula $\varphi \in \Phi_R$ such that $v(G) = h(\sat(G, \varphi))$ for every hypergraph $G$.

A \emph{linear CMS extremum problem} $P$ consists of an $r$-order hypergraph $G$, a formula $\varphi = \varphi(P) \in \Phi_R$ with $n$ free variables $X$ that characterizes feasible solutions, and a sequence $c$ of $n$ cost functions with $c_i: \typ(X_i) \to \mathbb R$.
This can be generalized to other totally-ordered sets than $\mathbb{R}$ but we will not need this generalization.
The value $c(Y)$ of a solution $Y$ for $P$ is defined as $\sum_{i \in [n]} \sum_{y \in Y_i} c_i(y)$.
An optimal solution of $P$ is a feasible solution $Y^*$ such that for each feasible solution $Y'$ of $P$, we have $c(Y^*) \le c(Y')$.
Our task is to find the value of an optimal feasible solution of $P$, or equivalently, to compute the value of $v(G) = \min \{ c(Y) \mid Y \in \sat(G, \varphi) \}$.
This evaluation can be expressed as $v(G) = h(\sat(G, \varphi))$ with $h(A) = \min \{ c(Y) \mid Y \in A \}$, and is thus an MS-evaluation.
Note that $h$ is a semi-homomorphism, and the corresponding evaluation structure is $\langle \mathbb R \cup \{ \infty \}, +, \min, 0, \infty \rangle$.
The class of all linear CMS extremum problems is called LinCMS.
Let $m = \left|\sat(G, \varphi)\right|$, and $\sat(G, \varphi) = \{ Y^1, \ldots, Y^m \}$.
Let $\Pi = \Pi(G, \varphi) \subseteq S_n([m])$ be the set of permutations such that $(c(Y^{\pi(i)}))_{i \in [m]}$ is nondecreasing for $\pi \in \Pi$.
For a problem $P \in$ LinCMS, \emph{$k$-val$(P)$} is again a problem that gets the same input as $P$ itself, plus some $k \in \mathbb N$.
It asks for the sequence $(c(Y^{\pi(i)}))_{i \in [k']}$ for some $\pi \in \Pi$, where $k' = \min \{ k, m \}$.
The problem \emph{$k$-sol$(P)$} gets the same input as $k$-val$(P)$, but asks for $(Y^{\pi(i)})_{i \in [k']}$ for any $\pi \in \Pi$.
Note that, depending on $\pi$, different outputs for $k$-sol$(P)$ are possible.
However, there is only one valid output for $k$-val$(P)$.
Also note that 1-val$(P)$ is equivalent to $P$ itself.
For the sake of simplicity, we assume $k \ge m$ for the remainder of this article.

\section{The second-best solution}
\label{sec:twoval}

In the following section, we describe how to solve 2-sol$(P)$ in linear time for each $P$ in LinCMS.
In a later section we generalize these results to an arbitrary (constant) number of solutions, and to LOGSPACE and PRAM models of computation.

Courcelle and Mosbah~\cite{DBLP:journals/tcs/CourcelleM93} showed that for a hypergraph $G$ generated by $\mathcal G_w$, and a formula $\varphi \in \Phi_R$, $\sat(G, \varphi)$ can be computed by a bottom-up tree automaton on a parse tree of $G$.
We give a version of their theorem for full parse trees, where every inner node has exactly two child nodes.

\begin{lemma}[Courcelle, Mosbah\cite{DBLP:journals/tcs/CourcelleM93}]
\label{thm:cm}
Let $\varphi \in \Phi_R$, and let $T$ be a full parse tree rooted in $r$ that represents a hypergraph $G$.
If $r$ is not a leaf, it has two child nodes $u_1$, $u_2$ representing $G_1 = G(u_1)$, $G_2 = G(u_2)$, respectively, and the following holds:
\begin{align}
\label{form:subforms}
\sat(G, \varphi) = \biguplus \left\{ \sat(G_1, \psi^k_1) \sqcup \sat(G_2, \psi^k_2) \mid k \in [l] \right\},
\end{align}
where $l$ only depends on the hypergraph operator associated with $r$, and $\psi^k_1$, $\psi^k_2$ are again in $\Phi_R$ for each $k \in [l]$.
\end{lemma}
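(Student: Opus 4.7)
The plan is to adapt the original Courcelle--Mosbah decomposition, stated for their three-operator algebra, to the binary setting of full parse trees, by a Feferman--Vaught-style composition argument. I would handle the three shapes of the root operator separately. For $\theta_{i,j,r}'$ and $\sigma_\alpha'$, the second child is a constant ($\mathbf 1$ or $\mathbf 0$) and contributes only the empty solution, so the decomposition reduces to finitely many formulas $\psi_1^k$ that capture the effect of the source renaming on satisfaction of $\varphi$ (paired with the trivially satisfied $\psi_2^k$). The interesting case is $\oplus_{r_1, r_2}$, where the vertex and edge sets of $G$ split as disjoint unions of the corresponding sets of $G_1$ and $G_2$, so every $Y \in \sat(G, \varphi)$ splits uniquely as $Y_1 \sqcup Y_2$ with $Y_j$ supported in $G_j$.

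For the $\oplus$ case, the key ingredient is that $\Phi_R$-equivalence on pointed hypergraphs (a hypergraph together with an interpretation of the free variables in $\mathcal W$ and its source sequence) has finite index, with each class $\Phi_R$-definable by a characteristic formula. The truth of $\varphi$ on $G_1 \oplus G_2$ under the interpretation $Y_1 \sqcup Y_2$ then depends only on the pair of types of the two parts: an induction on formula structure splits each quantified vertex or edge set across $G_1$ and $G_2$ via an Ehrenfeucht--Fra\"iss\'e-style argument, while the counting predicate $\card_{m,p}$ composes via the identity $|X_1 \sqcup X_2| \equiv |X_1| + |X_2| \pmod p$, which stays inside $\Phi_R$ because $p < q$ is fixed. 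Enumerating all finitely many type pairs $(\tau_1, \tau_2)$ that force $\varphi$ to hold on $G_1 \oplus G_2$, and letting $\psi_j^k \in \Phi_R$ be the characteristic formula of $\tau_j$ at position $k$, yields the required list. Each $Y \in \sat(G, \varphi)$ has exactly one type pair and each type pair contributes only satisfying solutions, so the two sides of \autoref{form:subforms} agree as sets of $n$-tuples; the sets indexed by distinct $k$ are pairwise disjoint, justifying the outer $\biguplus$. The bound $l$ is finite and depends only on the operator at $r$, given the fixed parameters of $\Phi_R$.

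The main obstacle is the composition theorem itself --- establishing finite index of $\Phi_R$-equivalence and definability of each class by a formula in $\Phi_R$. The classical Feferman--Vaught argument handles pure MSO; the counting predicates require tracking, as part of the type, the residues of quantifiable set cardinalities modulo each $p < q$, and the source sequence in the $\oplus$ case grows from $r_j$ to $r_1 + r_2$ and must still fit within the bound $R$ (which it does, since $G$ itself has order at most $R$). Given Courcelle and Mosbah's original result for their unary/binary operators, the cleanest route is to derive this full-parse-tree version from theirs by composing a bounded number of their elementary decomposition steps --- one for $\oplus$, plus at most one per dummy leaf absorbed into $\theta'$ or $\sigma'$ --- so that $l$ remains finite and each $\psi_j^k$ remains in $\Phi_R$.
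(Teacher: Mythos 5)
The paper offers no proof of \autoref{thm:cm} at all---it is stated as an import from Courcelle and Mosbah, adapted to full parse trees via the derived binary operators $\sigma'_\alpha = \sigma_\alpha \compose \oplus_{0,r}(\mathbf 0)$ and $\theta'_{i,j,r} = \theta_{i,j,r}\compose\cdots\compose\oplus_{1,r}(\mathbf 1)$. So there is no ``paper proof'' to match; your sketch is essentially the standard splitting/composition argument that underlies their theorem, and your closing suggestion---derive the binary version by composing a bounded number of their elementary decomposition steps, one per dummy leaf absorbed---is exactly the route the paper's setup presupposes. Your treatment of the $\oplus$ case (finite index of $\Phi_R$-equivalence, characteristic formulas for types, residues mod $p<q$ folded into the type to handle $\card_{m,p}$, disjointness of the union over $k$ because each solution has a unique type pair) is the right argument and is correctly organized.

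The one place where real content is missing is the $\theta'$ case, which you dismiss as ``source renaming'' with a dummy child that ``contributes only the empty solution.'' The operator $\theta_{i,j,r}$ is a \emph{fusion}: it identifies $\src_j$ with $\src_i$, so subsets of the features of $G$ do not pull back bijectively to subsets of the features of $G_1$ (a set containing the merged vertex has three preimages: containing $\src_i$, containing $\src_j$, or both), and \autoref{form:subforms} cannot hold literally as an equality of solution sets without a convention assigning ownership of the fused vertex. The paper resolves this by having the $\mathbf 1$ leaf \emph{introduce} $\src_i$ and by choosing the child formulas at the proper child so that $\src_i \notin S_k$ for every solution $S$ there---so the dummy child's solution set is emphatically \emph{not} just $\{\emptyvec\}$; it is the child that decides membership of $\src_i$ in each vertex-set variable, which is what makes the two factors non-interfering and the combination a legitimate $\uplus$. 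Under your convention the two sides of \autoref{form:subforms} do not even live over the same universe of features for $\theta'$ nodes. This is fixable (adopt the paper's convention, or any canonical choice, and push the case analysis over whether the fused source lies in each free set variable into the index $k$), but as written the $\theta'$ case of your argument does not go through.
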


In the situation of \autoref{thm:cm}, we call $\psi^k_i$ a \emph{child formula} of $\varphi$ with respect to the corresponding operator.
For $k \in [l]$, we call $(\psi^k_1, \psi^k_2)$ a \emph{fitting pair of child formulas}.
A solution $S \in \sat(G, \varphi)$ has a unique fitting pair $(\psi_1, \psi_2)$ of child formulas with $S \in (\sat(G_1, \psi_1) \sqcup \sat(G_2, \psi_2))$, which follows from the fact that all unions in \autoref{form:subforms} are disjoint.
All child formulas $\psi$ with respect to a $\theta_{i, j, m}$ node $u$ can be chosen such that $\src_i \notin S_k$ for any $S \in \sat(G(u'), \psi)$, where $u'$ is the proper child of $u$.
Further, Courcelle and Mosbah demonstrated that every MS-evaluation $v$ can be computed by a similar tree automaton.
The running time required to compute $v$ on the entire hypergraph $G$ is $\mathcal O(|G| \cdot \mu)$, where $\mu$ is the time required to compute $f(A) \oplus f(B)$ and $f(A) \otimes f(B)$ for valid combinations $A$, $B$.
In the uniform cost model, the operators of the evaluation structure $\langle \mathbb R \cup \{ \infty \}, +, \min, 0, \infty \rangle$, addition and selecting the smaller of two real numbers, require $\mathcal O(1)$ time.
Problems in LinCMS can therefore be solved in linear time.
Applying the semi-homomorphism $h$ with $v(G) = h(\sat(G, \varphi))$ to \autoref{form:subforms} yields
\begin{align}
\label{form:subeval}
v(G) = h(\sat(G, \varphi)) = \min \left\{ h(\sat(G_1, \psi^k_1)) + h(\sat(G_2, \psi^k_2)) \mid k \in [l] \right\}.
\end{align}
A different linear-time approach for this special case had been proposed earlier by Arnborg, Lagergren and Seese~\cite{DBLP:journals/jal/ArnborgLS91}.
The basic algorithm of Courcelle and Mosbah computes $h(\sat(G(u), \psi))$ for every parse tree node $u$ in a bottom-up manner, and every $\psi \in \Phi_R$.
Conceptually, we perform a depth-first search on $T$, starting at its root $r$.
Every time we finish a node $v$, we \emph{evaluate} it, i.e., we compute the evaluation $h(\sat(G(v), \phi))$ for every formula $\psi \in \Phi_R$ based on the child formulas of $\psi$ according to \autoref{form:subeval}.
Since the number of formulas and the number of child formulas per formula are fixed, the overall running time is linear in the size of $T$.

Courcelle and Mosbah also propose an improved algorithm, called the \emph{CM algorithm} in this article, that determines in a top-down preprocessing phase the set of formulas that are reachable from $\varphi$ at $r$ via the child formula relation.
We call these the \emph{relevant formulas} of a parse tree node $u$.
Further, we say that $u$ \emph{introduces} a hypergraph feature $x$ if there is a feasible solution for relevant formula $\psi$ of $u$ that contains $x$, but no feasible solution of relevant formulas at the children of $u$ contains $x$.
Hypergraph features can only be introduced by leaf nodes and $\theta$ nodes, because all solutions for $\sigma$ and $\oplus$ nodes are unions of solutions for their child nodes.
A node associated with a $\theta_{i, j, r}$ operation can only introduce the source vertex $\src_i$.
In full parse trees, these $\theta$ nodes have an extra child $v$ that represents a hypergraph consisting only of $\src_i$.
Therefore, all hypergraph features are introduced by leaf nodes in full parse trees.
Without loss of generality, exactly one parse tree node $u(x)$ introduces each hypergraph feature $x$.

Let $u$ be a parse tree node, $\psi_1$, $\psi_2$ two formulas relevant for $u$.
The CM algorithm only computes $h(G(u), \psi_1)$ and $h(G(u), \psi_2)$, which correspond to the values of optimal solutions on $G(u)$ for the problems characterized by $\psi_1$ and $\psi_2$, respectively.
We do not have any information about the solutions themselves besides their values.
In particular, we do not know for any hypergraph feature if it appears in one of those solutions.
Since we do not require optimal solutions to be unique, we also do not know if they represent the same solution even in the case $h(\sat(G(u), \psi_1)) = h(\sat(G(u), \psi_2))$.

In the next section, we need a way to tell if the optimal solutions for $\psi_1$ and $\psi_2$ are the same by only looking at their evaluations.
For this purpose, we establish for each parse tree node $u$ a mapping from the relevant formulas of $u$ to solution IDs in $[|\Phi_R|]$ that has the following discriminating property.
Each relevant formula $\psi$ of $u$ is mapped to an optimal solution for $\psi$ on $G(u)$ such that two formulas are assigned the same solution if and only if they are assigned the same solution ID.
Solution IDs can be computed along with solution values while keeping the linear time bound:

\begin{lemma}
\label{lemma:equality}
Given all solution IDs for all child nodes of a parse tree node $u$, solution IDs for $u$ can be computed in constant time.
\end{lemma}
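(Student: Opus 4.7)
The plan is to maintain inductively, for each relevant formula $\psi$ at a node $u$, a designated optimal solution $T(u,\psi)$, and to let the ID of $\psi$ at $u$ be a canonical integer label of $T(u,\psi)$. By assumption, the IDs at the children $u_1, u_2$ already enjoy the discriminating property, and the children's evaluation values $h(\sat(G(u_i), \cdot))$ needed in \autoref{form:subeval} are also available.

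For the inductive step at $u$, I would, for each relevant formula $\psi$, iterate through the (constantly many) fitting pairs $(\psi_1^k, \psi_2^k)$ from \autoref{thm:cm}, compute $h(\sat(G_1, \psi_1^k)) + h(\sat(G_2, \psi_2^k))$ from the children's stored values, and let $k^*(\psi)$ be the minimizer, breaking ties by smallest index. I then define the \emph{signature} of $\psi$ to be the pair $\bigl(\mathrm{ID}(u_1, \psi_1^{k^*}),\mathrm{ID}(u_2, \psi_2^{k^*})\bigr)$ and set $T(u,\psi) := T(u_1, \psi_1^{k^*}) \sqcup T(u_2, \psi_2^{k^*})$. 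Finally I group the relevant formulas of $u$ by signature and assign the same ID in $[|\Phi_R|]$ to formulas sharing a signature. Since $|\Phi_R|$ and the number $l$ of fitting pairs per formula are both constants, and every lookup of a child's ID or evaluation value is $\mathcal O(1)$, the entire step runs in constant time. Leaves, having no children, are handled directly by enumerating the bounded number of satisfying assignments on the constant-size hypergraph $G(u) \in \{\mathbf 0, \mathbf 1, \mathbf a\}$.

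The easy direction of the discriminating property is immediate: equal signatures give equal designated child solutions by the inductive hypothesis, hence equal disjoint unions at $u$. The step I expect to require the most care is the converse, because two distinct parent formulas $\psi, \psi'$ sharing a designated solution $S$ at $u$ may in general pick entirely different fitting pairs, so that $k^*(\psi) \neq k^*(\psi')$ and even $\psi_i^{k^*(\psi)} \neq \psi_i'^{k^*(\psi')}$. To handle this, I would observe that the decomposition $S = S_1 \sqcup S_2$ into the features lying in $G_1$ and $G_2$ respectively is a purely set-theoretic operation and therefore independent of the parent formula. The disjointness of the union in \autoref{form:subforms} together with the uniqueness of the fitting pair noted just after \autoref{thm:cm} then force $T(u_i, \psi_i^{k^*(\psi)}) = S_i = T(u_i, \psi_i'^{k^*(\psi')})$ for $i \in \{1,2\}$; applying the discriminating property at the children collapses these two child formulas to the same ID at $u_i$, so the signatures of $\psi$ and $\psi'$ coincide, as required.
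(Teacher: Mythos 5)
Your proof is correct and follows essentially the same route as the paper's: inductively designate an optimal solution for each relevant formula via a chosen optimal fitting pair, use the pair of child IDs as a temporary identifier (the paper encodes it as $a\cdot|\Phi_R|+b$, you as a ``signature''), and compress into $[|\Phi_R|]$. The only substantive difference is that you spell out the converse direction of the discriminating property --- that a shared designated solution forces equal signatures via the uniqueness of the decomposition across $G_1$ and $G_2$ --- which the paper dismisses with a one-line appeal to induction.
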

\begin{proof}
If $u$ is a leaf, let $M_u$ be the set of possible solutions in $G(u)$.
Let $\pi: [|M_u|] \to M_u$ be a permutation of these solutions such that their values are nondecreasing, i.e., $c(\pi(i)) \le c(\pi(i + 1))$ for any $i \in [|M_u| - 1]$.
For each formula $\psi$ relevant for $u$, we assign the smallest $i$ to $h(\sat(G(u), \psi))$ with $c(\pi(i)) = h(\sat(G(u), \psi))$ as the temporary ID for $\psi$.
If $u$ is an inner node,  let $v_1$, $v_2$ be the child nodes of $u$.
At least one fitting pair of child formulas $\rho$ have $h(\sat(G(u), \psi)) = h(\sat(G(v_1), \rho_1) + h(\sat(G(v_2), \rho_2))$.
We choose one of these pairs arbitrarily.
Let $a$, $b$ be the solution IDs associated with $\rho_1$ at $v_1$ and with $\rho_2$ at $v_2$, respectively.
We assign to $\psi$ the temporary ID $a \cdot |\Phi_R| + b$.

By induction, temporary IDs have the discriminating property, but there might be temporary IDs larger than $|\Phi_R|$.
There are at most $|\Phi_R|$ distinct temporary IDs in use, since $u$ cannot have more than $\Phi_R$ many relevant formulas.
We can therefore remap temporary IDs to the range $[|\Phi_R|]$ using an injective compression function.
\end{proof}

The compression function can be stored with $u$, and a mapping from solution ID to solution at leaf nodes, requiring only constant space.
Even if $\sat(G(u), \psi)$ contains multiple optimal solution, it is now possible to refer to \emph{the optimal solution}, which is the one defined recursively in terms of matching solution IDs.
This particular optimal solution can be found by a simple depth-first search based algorithm in linear time, starting at $\varphi$ at the root of the parse tree.
For each $(u, \psi)$, we find a fitting pair of child formulas $\rho_1$, $\rho_2$ for child nodes $v_1$, $v_2$ of $u$, respectively, such that the assigned solution IDs match, process $(v_1, \rho_1)$ and $(v_2, \rho_2)$ independently, and output the associated subsolution at leaf nodes.

To solve 2-val$(P)$, we have to adapt the evaluation structure.
Instead of values in $\mathbb R \cup \{ \infty \}$, we use pairs $(x, y) \in \mathcal R = (\mathbb R \cup \{ \infty \})^2$, where $x$ represents the value of an optimal solution, and $y$ represents the value of a second-best solution.
We define two new binary operators $+_2$ and $\mintwo$ over $\mathcal R$, with $(x_1, y_1) +_2 (x_2, y_2) = (x_1 + x_2, \min(x_1 + y_2, x_1 + y_1))$ and $\mintwo((x_1, y_1), (x_2, y_2)) = (a, b)$, where $a$, $b$ are the smallest and second-smallest element of the multiset $\{ x_1, y_1, x_2, y_2 \}$, respectively.

\begin{lemma}
\label{lemma:twoval}
Let $P$ be a LinCMS problem characterized by the formula $\varphi \in \Phi_R$ and cost functions $c$.
Given a parse tree $T$ of a hypergraph $G$, the CM algorithm solves 2-val$(P)$ in linear time when used in conjunction with the evaluation structure $\langle \mathcal R, +_2, \mintwo, (0, 0), (\infty, \infty) \rangle$.
\end{lemma}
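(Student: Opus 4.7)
The plan is to recast $2$-val$(P)$ as an MS-evaluation in the sense of \autoref{sec:twoval} and then invoke the Courcelle--Mosbah machinery summarized by \autoref{form:subeval}. Define a mapping $h_2$ from sets of feasible solutions to $\mathcal R$ by $h_2(A) = (x, y)$, where $x$ is the smallest and $y$ the second-smallest element of the multiset $\{ c(Y) \mid Y \in A\}$, using $\infty$ for any quantity that is undefined. Then $h_2(\sat(G, \varphi))$ is precisely the output required by $2$-val$(P)$, so it suffices to verify that $h_2$ is a semi-homomorphism into $\langle \mathcal R, +_2, \mintwo, (0,0), (\infty,\infty) \rangle$ and that each application of $+_2$ and $\mintwo$ takes constant time in the RAM model.

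The heart of the argument is checking the two compositional identities $h_2(A \uplus B) = h_2(A) +_2 h_2(B)$ for non-interfering $A$, $B$, and $h_2(A \sqcup B) = \mintwo(h_2(A), h_2(B))$ for disjoint $A$, $B$. For $\uplus$, each element of $A \uplus B$ has cost $c(X) + c(Y)$ with $X \in A$, $Y \in B$, so the best combined cost is $x_1 + x_2$, and the next-best is obtained by replacing exactly one of $X$, $Y$ with its runner-up, giving $\min(x_1 + y_2, x_2 + y_1)$. Any combination that uses a non-best solution on both sides has cost at least $y_1 + y_2$, which is dominated by both alternatives because $x_i \le y_i$. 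For $\sqcup$, the cost multiset of $A \sqcup B$ is the union of the cost multisets of $A$ and $B$, so its two smallest elements lie in $\{x_1, y_1, x_2, y_2\}$, which is exactly what $\mintwo$ returns. The remaining constant conditions on $\emptyvec$ and $\emptyset$ follow directly from the definition of $h_2$.

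Once $h_2$ is known to be a semi-homomorphism, \autoref{form:subeval} applies verbatim with $+_2$ and $\mintwo$ replacing $+$ and $\min$, and the bottom-up CM evaluation produces $h_2(\sat(G, \varphi))$ at the root of $T$; its first coordinate is the optimum and its second coordinate is the requested second-best value. Since both $+_2$ and $\mintwo$ involve only a bounded number of additions and comparisons, the per-node cost factor $\mu$ in the CM analysis is $\mathcal O(1)$, so the total running time is $\mathcal O(|G|)$. The main subtlety I anticipate is the $\uplus$ case, specifically ruling out the possibility that the second-best element of $A \uplus B$ is a combination of runners-up drawn from both factors simultaneously; as noted above, this is handled by the monotonicity $y_1 + y_2 \ge \max(x_1 + y_2, x_2 + y_1)$ that follows from $x_i \le y_i$.
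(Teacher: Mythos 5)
Your proposal is correct and follows essentially the same route as the paper: express $2$-val$(P)$ as an MS-evaluation via a semi-homomorphism into the pair structure and run the CM algorithm with $+_2$ and $\mintwo$, each costing $\mathcal O(1)$ per application. You additionally verify the two compositional identities (including the key point that a double runner-up combination is dominated, and the corrected second component $\min(x_1+y_2, x_2+y_1)$ of $+_2$), which the paper's proof merely asserts.
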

\begin{proof}
For each leaf $u$ of $T$, we solve 2-val$(P)$ on $G(u)$ directly.
Let $S^1 \in \sat(G, \varphi)$ be optimal.
The mapping $v(G) = (c(S_1), \min(c(\sat(G, \varphi) \setminus \{ S^1 \}))$ can be written as $h(\sat(G, \varphi))$ with $h(A \uplus B) = h(A) \mintwo h(B)$ and $h(A \sqcup B) = h(A) +_2 h(B)$.
The operators $\mintwo$ and $+_2$ can be evaluated in constant time, resulting in linear total time using the CM algorithm.
\end{proof}

The concept of solution IDs can be trivially generalized to the new evaluation structure.
This enables us to refer to \emph{the} optimal and second-best solution, and to reconstruct these two solutions in linear time.

\begin{corollary}
Let $P$ be a LinCMS problem, and let $w \in \mathbb N$ be fixed.
Given a parse tree $T$ of a hypergraph $G$ with bounded treewidth, we can solve 2-sol$(P)$ on $G$ in time $\mathcal O(|G|)$.
\end{corollary}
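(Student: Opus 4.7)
The plan is to replay, with small changes, the reconstruction template that follows Lemma~\ref{lemma:equality}, but driven by the pair-valued evaluation structure of Lemma~\ref{lemma:twoval}. First I would extend the solution-ID construction of Lemma~\ref{lemma:equality} to $\langle \mathcal R, +_2, \mintwo, (0,0), (\infty,\infty)\rangle$ so that for each relevant formula $\psi$ at each parse-tree node $u$ I assign \emph{two} IDs in $[|\Phi_R|]$: one naming \emph{the} optimal solution of $\psi$ on $G(u)$, one naming \emph{the} second-best. The bottom-up rules adapt directly: at a leaf I enumerate the (constantly many) candidate solutions of $M_u$ in nondecreasing value order and read off IDs for the two cheapest that match the corresponding coordinate of $h(\sat(G(u),\psi))$; at an inner node I concatenate the relevant child IDs from a chosen fitting pair and compress to $[|\Phi_R|]$ as before. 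By the same argument this is $\mathcal O(1)$ per node, so by Lemma~\ref{lemma:twoval} the whole bottom-up pass stays in $\mathcal O(|G|)$.

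Alongside the bottom-up computation I would record at each inner node $u$ constant-size \emph{witnesses}: for the pair $(x,y)=h(\sat(G(u),\psi))$ I store which fitting pair $k\in[l]$ of child formulas the two coordinates were taken from, and, within each operator, which combination realized each coordinate — for $\mintwo$, which of the four values $\{x_1,y_1,x_2,y_2\}$ gave the minimum and which gave the second-minimum; for $+_2$, whether the second coordinate came from $x_1+y_2$ or from $y_1+x_2$. Because there are only constantly many child formulas and constantly many cases, this witness table has $\mathcal O(1)$ size per node and $\mathcal O(|G|)$ size in total, and populating it costs only constant extra work per evaluation.

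Given the IDs and witnesses, I recover both solutions by two top-down depth-first searches on $T$ starting at $(\varphi,r,\textsf{best})$ and $(\varphi,r,\textsf{2nd})$. At a visited triple $(u,\psi,\textsf{label})$ I consult the stored witness to locate the fitting pair $(\rho_1,\rho_2)$ and the label each child must supply; matching of IDs between parent and child guarantees that the recursive choices describe \emph{the same} distinguished solution at every level. At a leaf I emit the stored subsolution indexed by the requested ID. Taking unions of the contributions across leaves yields both the optimal and the second-best solution for $P$ on $G$, and each of the two searches traverses each edge of $T$ a constant number of times, giving total reconstruction time $\mathcal O(|T|)=\mathcal O(|G|)$.

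The main obstacle I anticipate is purely bookkeeping rather than conceptual: at an $\mintwo$ step the second-best at the parent may be pulled out of any of the four child-side slots, and at a $+_2$ step it mixes a best of one side with a second-best of the other, so the witnesses must disambiguate which slot was used by each coordinate. As long as one is disciplined about recording and later consulting this constant-size label during the bottom-up and top-down passes respectively, correctness reduces to the (already established) correctness of the CM algorithm and of the pair-valued evaluation of Lemma~\ref{lemma:twoval}, and linearity is preserved.
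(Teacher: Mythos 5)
Your proposal is correct and follows essentially the same route as the paper: the paper's (terse) justification is precisely that the solution-ID mechanism of Lemma~\ref{lemma:equality} generalizes to the pair-valued evaluation structure of Lemma~\ref{lemma:twoval}, after which both distinguished solutions are reconstructed by the same top-down depth-first traversal in linear time. Your explicit witness bookkeeping for the $\mintwo$ and $+_2$ slots is just a careful spelling-out of what the paper calls a trivial generalization.
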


\section{Dynamizing the second-best solution}

In this section, we introduce the evaluation tree data structure that stores intermediate results of the algorithm from the previous section.
The data structure allows for a trivial query for the values of the two best solutions in constant time, and a query for the solutions themselves in linear time.
We demonstrate how to perform a query for a pivot feature, and how to update an evaluation tree to match the two subproblems with respect to this pivot feature as described in \autoref{sec:subproblems}.
Both operations start in the root of the given tree, enabling us to use the path persistence technique described in \autoref{sec:persistence}.

An \emph{evaluation tree} is a tree with the same structure as the parse tree.
We store with every node of the evaluation tree the result of all evaluations of the CM algorithm as described in \autoref{sec:twoval}, as well as all solution ID information.
In addition, we also store the solution sets $\sat(G(u), \psi)$ themselves for each leaf node $u$.
We first describe the query and update procedures for problems characterized by CMS formulas with exactly one free variable.
For this purpose, we identify solutions $(S_1)$ with their first set $S_1$.
For the sake of simplicity, we also identify nodes of the parse tree with their twins in the evaluation tree of the current subproblem.
We assume that the hypergraph $G$ is represented as a full parse tree $T$.

Let $\varphi \in \Phi_R$ be a CMS formula that characterizes a subproblem $P$, and let $x$ be a pivot feature of $P$.
If $x$ is a source vertex of $G$, the two subproblems of $P$ can again be characterized by formulas in $\Phi_R$, namely $\varphi \land (x \in X_1)$ and $\varphi \land (x \notin X_1)$, respectively.
However, if $x$ is a hyperedge or a vertex that is not a source vertex of $G$, there are no such formulas in general.
Therefore, not all subproblems can be characterized by a CMS formula, and we have to generalize the mapping $\sat$ to cover those subproblems.
We define the set $\sat_P(G(u), \psi)$ to contain all solutions in $\sat(G(u), \psi)$ that satisfy the constraints imposed by the binary subproblem tree.
Note that $h(\sat_P(G, \varphi))$ is the solution of 2-val$(P)$.

Let $S^1$ and $S^2$ be the optimal and second-best solution of $P$, respectively.
To find a pivot feature $x$ for $P$, we maintain a current parse tree node $u_j$, two formulas $\psi^1_j$ and $\psi^2_j$, and the invariants $S^1(u_j) \in \sat_P(G(u_j), \psi^1_j)$, $S^1(u_j) \in \sat_P(G(u_j), \psi^2_j)$, and $S^1(u_j) \ne S^2(u_j)$.
In each iteration, we choose $u_{j + 1}$ to be a child node of $u_j$ in the parse tree.
Initially, $u_1$ is the root of $T$, and $\psi^1_1 = \psi^2_1 = \varphi$.
The invariant holds trivially because of $h(\sat_P(G(u_1), \varphi)) = h(\sat_P(G, \varphi))$, and because the optimal solution differs from the second-best one by definition.

If $u_j$ is a leaf, it is possible to construct the fixed-size solutions $S^1(u)$ and $S^2(u)$ explicitly.
If the invariant holds, these solutions are distinct, and we can determine a pivot feature in constant time.
Otherwise, $u_j$ is an inner node with child nodes $v_1$ and $v_2$.
Let $\rho^1$ ($\rho^2$) be a fitting pair of child formulas for $\psi^1_j$ ($\psi^2_j$) for which solution IDs match.
If the invariant holds for $j$, the solutions $S^1(u_j)$ and $S^2(u_j)$ differ, their subsolutions at $v_1$ ($S^1(v_1)$ and $S^2(v_1)$) or those at $v_2$ have to differ as well.
We choose $u_{j + 1}$, $\psi^1_{j + 1}$ and $\psi^2_{j + 1}$ accordingly.
The invariant for $j + 1$ then holds by construction.

\begin{lemma}
Given an evaluation tree of depth $d$ for a subproblem $P$, the above algorithm finds a pivot feature for $P$ in time $\mathcal O(d)$.
\end{lemma}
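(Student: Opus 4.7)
The plan is to verify by induction on the iteration index $j$ that the stated invariant is preserved during the descent, then combine this with the fact that the descent follows a root-to-leaf path of length at most $d$, together with a constant-time bound per step that relies crucially on the solution IDs stored in the evaluation tree.

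For the base case, take $u_1$ to be the root $r$ so that $G(u_1)=G$ and $\psi^1_1=\psi^2_1=\varphi$. Then $S^1(u_1)=S^1$ and $S^2(u_1)=S^2$, both lie in $\sat_P(G,\varphi)$, and they differ by the very definition of second-best solution, so the invariant holds trivially. For the inductive step at an inner node $u_j$ with child nodes $v_1,v_2$, \autoref{thm:cm} expresses $\sat(G(u_j),\psi^i_j)$ as a disjoint union $\biguplus_k \sat(G(v_1),\rho^{i,k}_1)\sqcup\sat(G(v_2),\rho^{i,k}_2)$ over fitting pairs of child formulas. Hence each of $S^1(u_j)$ and $S^2(u_j)$ sits in a unique block, producing fitting pairs $\rho^1=(\rho^1_1,\rho^1_2)$ and $\rho^2=(\rho^2_1,\rho^2_2)$ with $S^i(u_j)=S^i(v_1)\sqcup S^i(v_2)$ and $S^i(v_t)\in\sat_P(G(v_t),\rho^i_t)$. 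Because $S^1(u_j)\ne S^2(u_j)$, at least one $k\in\{1,2\}$ satisfies $S^1(v_k)\ne S^2(v_k)$; choosing $u_{j+1}=v_k$ and $\psi^i_{j+1}=\rho^i_k$ preserves the invariant.

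The central issue is showing that the per-step work is $\mathcal O(1)$. The number of fitting pairs at any operator is bounded by a constant depending only on $\mathcal G_w$ and $\Phi_R$. During the bottom-up construction of the evaluation tree (performing the CM algorithm for the evaluation structure $\langle\mathcal R,+_2,\mintwo,(0,0),(\infty,\infty)\rangle$ together with the solution-ID assignment of \autoref{lemma:equality} generalized as noted after \autoref{lemma:twoval}), we stored at each $(u_j,\psi^i_j)$ a specific fitting pair $\rho^i$ realizing the pair of values at $u_j$, together with the solution IDs of $\rho^i_1$ at $v_1$ and $\rho^i_2$ at $v_2$ for both the best and the second-best components. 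Hence $\rho^1,\rho^2$ are retrieved in constant time. The equality $S^1(v_k)\stackrel{?}{=}S^2(v_k)$ is decided in constant time by comparing the associated solution IDs, and this comparison identifies the correct child $v_k$ to descend into. Finally, when we arrive at a leaf $u_j$, $G(u_j)$ has constant size (one edge or one vertex), the sets $S^1(u_j)$ and $S^2(u_j)$ can be materialized directly from the stored leaf information, and since the invariant forces them to differ, any element of their symmetric difference is a pivot feature, found in constant time.

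The descent moves strictly toward a leaf at every step, so it terminates after at most $d$ iterations; combined with the $\mathcal O(1)$ cost per step, this yields the claimed $\mathcal O(d)$ bound. The main conceptual obstacle in writing the argument is not the correctness of descent itself but rather justifying the $\mathcal O(1)$ per-step cost: this requires that solution IDs for the pair-valued evaluation structure be precomputed and stored in the evaluation tree in a form that allows both (i) constant-time retrieval of the fitting pairs of child formulas realizing the best and second-best values at each node, and (ii) constant-time equality testing between $S^1(v_k)$ and $S^2(v_k)$. Both properties follow from the natural extension of \autoref{lemma:equality} to the evaluation structure of \autoref{lemma:twoval}, which is why the lemma is stated for evaluation trees that already encode this auxiliary information.
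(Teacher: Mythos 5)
Your proof follows the same route as the paper: establish the invariant at the root, preserve it by descending into a child whose subsolutions differ (detected via solution IDs for the pair-valued evaluation structure), and extract a pivot feature from the symmetric difference at a constant-size leaf, giving $\mathcal O(1)$ work per level and $\mathcal O(d)$ overall. Your write-up is in fact more explicit than the paper's about why each step costs constant time, but it is the same argument.
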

\begin{proof}
In each iteration, $u_j$ has depth $j$ in the parse tree, so the algorithm has to terminate after $d$ iterations.
Each iteration can be performed in constant time.
In the last iteration, we choose an element in the symmetric difference of $S^1(u_j)$ and $S^2(u_j)$, which is guaranteed to be a pivot feature if the invariant holds.
We already showed by induction that the invariant stays intact.
\end{proof}

Next, we describe the update process to transform the evaluation tree for $P$ into the evaluation tree for one of its two subproblems.
The process is symmetric for both subproblems, so we only describe it for the subproblem $P'$ that requires feasible solutions to contain the pivot feature.

First, we execute the query algorithm to find a pivot feature $x$, and push each node it visits to a stack.
The pivot query algorithm can only terminate at leaf nodes, so the last node $u$ that is pushed to the stack is a leaf.
Recall that for each leaf node $u$ and each relevant formula $\psi$, we store $\sat_{P}(G(u), \psi)$ explicitly.
We enumerate this solution set and remove every solution that does not contain $x$ to obtain $\mathcal S(\psi) = \sat_{P'}(G(u), \psi)$.
To re-evaluate $\psi$, we apply $h$ to $\mathcal S(\psi)$.

Any other node $u$ on the stack is an inner node of the full parse tree, and an ancestor of $u(x)$.
Re-evaluation works the same as the original evaluation in the CM algorithm, by evaluating \autoref{form:subeval} with operators $\mintwo$, $+_2$ instead of $\min$, $+$ as in \autoref{lemma:twoval}.

\begin{lemma}
Given an evaluation tree of depth $d$ for a subproblem $P$, the above algorithm updates the evaluation tree to match a subproblem $P'$ of $P$ in time $\mathcal O(d)$.
\end{lemma}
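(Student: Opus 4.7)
The plan is to analyze the update procedure in two passes, exploiting the fact that the only nodes whose stored evaluations actually change are the ancestors of the introducing leaf $u(x)$, and that at each such node the work is constant.

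First, I would invoke the previous lemma to conclude that the query phase finds the pivot feature $x$ and pushes the visited nodes onto the stack in time $\mathcal O(d)$, with the stack containing at most $d+1$ nodes, all lying on the root-to-$u(x)$ path. I would then justify why only these nodes need to be touched: by the discussion preceding the section, every hypergraph feature has a unique introducing node $u(x)$, and for any parse tree node $u'$ not on the root-to-$u(x)$ path we have $x \notin G(u')$. Consequently $\sat_P(G(u'),\psi) = \sat_{P'}(G(u'),\psi)$ for every relevant $\psi$, so the previously stored evaluations, solution IDs, and (at leaves) explicit solution sets remain valid under the new constraint. This means correctness of the update reduces to correctness of the recomputation along the root-to-$u(x)$ path.

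Next, I would bound the cost per node on the stack. At the leaf $u(x)$, each stored set $\sat_P(G(u(x)),\psi)$ has size bounded by a constant depending only on the finite algebra $\mathcal G_w$ and on $|\Phi_R|$, so filtering out solutions that do not contain $x$ and re-applying $h$ yields $\mathcal S(\psi)$ and the new evaluation in $\mathcal O(1)$ time; since there are $\mathcal O(1)$ relevant formulas, the leaf is re-evaluated in $\mathcal O(1)$. At an internal ancestor $u_j$, re-evaluation is done exactly as in the CM algorithm using \autoref{form:subeval} with $+_2$ and $\mintwo$ in place of $+$ and $\min$; since the number $l$ of fitting pairs of child formulas and the number of relevant formulas are both bounded constants, and since each $+_2$ and $\mintwo$ step runs in $\mathcal O(1)$, the whole re-evaluation at $u_j$ costs $\mathcal O(1)$. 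Solution IDs at $u_j$ are likewise recomputed in constant time by the rule given in the proof of \autoref{lemma:equality}.

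Summing: the query phase contributes $\mathcal O(d)$, the stack has $\mathcal O(d)$ entries, and the work at each entry is $\mathcal O(1)$, giving the claimed $\mathcal O(d)$ total. The step I expect to require the most care is the correctness argument rather than the running-time count, namely the claim that it truly suffices to recompute along the root-to-$u(x)$ path. This rests on the property that in a full parse tree each hypergraph feature is introduced at exactly one leaf, together with the observation that imposing a constraint on $x$ only changes $\sat$ at nodes whose represented hypergraph contains $x$; any slip in these facts would allow stale values to persist at off-path nodes and invalidate the evaluations propagated upward from them.
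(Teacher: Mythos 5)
Your running-time analysis is correct and coincides with the paper's: the update touches exactly the nodes visited by the pivot query, i.e.\ the root-to-$u(x)$ path of length at most $d$, and at each node the work (filtering a constant-size solution set at the leaf, re-evaluating a constant number of relevant formulas via \autoref{form:subeval} with $+_2$ and $\mintwo$ at inner nodes, and recomputing solution IDs) is $\mathcal O(1)$. Where you diverge from the paper is in which half of the correctness claim you verify. You argue the ``off-path'' direction: since each feature is introduced at exactly one leaf, the sets $\sat_P(G(u'),\psi)$ at nodes $u'$ outside the path are unaffected by the new constraint, so their stored evaluations may be reused. The paper instead argues the ``on-path'' direction: by induction from $u(x)$ up to the root, every solution at an ancestor is a union involving a subsolution from the child all of whose solutions contain $x$, hence after re-evaluation every solution at the root contains $x$, i.e.\ the recomputation really does enforce the constraint of $P'$. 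Each proof tacitly assumes the half the other makes explicit; yours would be strengthened by adding the paper's one-line induction showing that filtering only at the leaf $u(x)$ suffices to propagate the constraint to the root (and, dually, the paper's proof benefits from your observation that off-path evaluations remain valid). One small imprecision: your statement that $x\notin G(u')$ for off-path $u'$ is not literally true for source vertices, which can occur in several subtrees; what holds, by the paper's convention that child formulas at $\theta$ nodes are chosen so that the identified source is never counted in the proper child's solutions, is that $x$ never appears in any \emph{solution} stored at an off-path node, which is what your argument actually needs.
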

\begin{proof}
After re-evaluating all formulas at the leaf node $u(x)$, all solutions at $u(x)$ contain $x$.
Let $u$ be an inner node with a child node $v$ such that all solutions at $v$ contain $x$.
Since all solutions at $u$ are unions of one solution at $v$ and one other solution, all solutions at $u$ also contain $x$.
Every ancestor of $u(x)$ is re-evaluated after $u(x)$ itself, so every solution at the root of the evaluation tree of $P'$ contains $x$ by induction.

Filtering solutions of fixed size from a solution set of fixed size, using a criterion that can be checked in constant time, takes again constant time.
Only a fixed number of solution sets need to be filtered.
At inner nodes, we simulate the CM algorithm, which takes constant time per node.
We process the same number nodes as the pivot query algorithm, which is bounded by $d$.
\end{proof}

Let $S$ be the optimal or second-best solution of $P$, whichever remains feasible in $P'$.
We do not require optimal solutions for subproblems to be unique, so we might find two optimal solutions for $P'$ that both differ from $S$.
Alternatively, $S$ might turn up as the second-best solution for $P'$.
Although these results would be valid outcomes of the update procedure, the $k$-best algorithm based on the binary subproblem tree as described in \autoref{sec:subproblems} would still break.
Once we lose track of $S$ by choosing two other solutions as the optimal and second-best solution for $P'$, we cannot detect whether we find $S$ in a subproblem of $P'$ again, leading to (the value of) $S$ being output twice.
That is because solution IDs are not suitable to check two solutions with respect to different evaluation trees for equality efficiently.
Therefore, we need to make sure that $S$ is the optimal solution of $P'$ as encoded by the solution IDs for $P$ and $P'$.
Fortunately, the update procedure can trivially enforce this property.

Now consider a problem that is characterized by $n$ free variables, with $n > 1$.
We may consider $n$ a constant, as it only depends on the problem at hand.
Recall that two solutions $S$, $S'$ are considered distinct if there is a discriminating index $i$ with $S_i \ne S_i'$.
The pivot query algorithm has to take this into account.
For inner nodes, it solely relies on solution IDs to choose a successor of the current parse tree node.
For leaf nodes, there is still a fixed number of feasible solutions for a fixed-size hypergraph.
We can apply the same arguments as above to once more obtain running time $\mathcal O(d)$.
Similarly, we only have to adapt the processing of nodes for the update algorithm.
For $n = 1$, we always imposed the new subproblem constraint on the first and only set of a solution, because the discriminating index was always the same.
Now, we have to take into account the discriminating index as determined during the pivot query phase.
Using the same arguments, updates can still be performed in time $\mathcal O(d)$.
We can now state our main result.

\begin{theorem}
Let $P$ be a LinCMS problem.
Given a graph $G$ with bounded treewidth and a number $k$, computing the values of the $k$ best solutions for $P$ on $G$ requires $\mathcal O(|G| + k \log |G|)$ time and space.
\end{theorem}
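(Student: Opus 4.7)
The plan is to assemble the pieces already developed in the paper into the asymptotic bound stated, using the binary heap of subproblems of \autoref{sec:subproblems} as the outer driver, the evaluation tree machinery of the previous section as the per-subproblem engine, and path-copying persistence (\autoref{sec:persistence}) as the glue that lets us traverse the subproblem tree without redoing work.

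First I would preprocess the input: using the Bodlaender--Hagerup construction cited in the shallow tree decomposition subsection, compute in $\mathcal O(|G|)$ time a binary shallow tree decomposition of width $\mathcal O(w)$ and depth $d = \mathcal O(\log |G|)$, and convert it into a full parse tree $T$ over $\mathcal G_w$ of the same size and depth. I would then run the CM algorithm with the evaluation structure $\langle \mathcal R, +_2, \text{min}_2,(0,0),(\infty,\infty)\rangle$ from \autoref{lemma:twoval}, annotated with the solution IDs of \autoref{lemma:equality}, to build an initial evaluation tree for the unconstrained root subproblem $P_0$ in $\mathcal O(|G|)$ time and space. At the root of this tree we can read off both $h(\sat(G,\varphi))$ (the best value) and the second-best value in $\mathcal O(1)$.

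Next I would run the binary subproblem tree scheme. I maintain a priority queue of feasible subproblems keyed by their second-best values; the priority queue is initialized with $P_0$. While fewer than $k$ values have been output, I extract the subproblem $P$ with smallest second-best value, output that value, and expand $P$: invoke the pivot-query algorithm from the previous section on its evaluation tree to find a pivot feature $x$ in time $\mathcal O(d)$, then use the update algorithm twice, once with $x$ forced in and once with $x$ forbidden, to create the evaluation trees for the two children of $P$ in time $\mathcal O(d)$ each. Here I crucially invoke path copying: because both the query and the update algorithms start at the root of the evaluation tree and only touch nodes along a single root-to-leaf path (the stack of ancestors of $u(x)$), each child evaluation tree can be represented by copying only those $\mathcal O(d)$ nodes and sharing the rest with the parent. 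Each child's root stores its own new best and second-best values, which I insert into the priority queue.

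For the accounting, the outer best-first search performs $\mathcal O(k)$ subproblem expansions, each costing $\mathcal O(d) = \mathcal O(\log |G|)$ time and $\mathcal O(\log |G|)$ additional space for the path copies. Priority queue operations cost $\mathcal O(\log k) = \mathcal O(\log |G| + \log k)$ each, which is absorbed (or, if one insists on a tight bound when $k \gg \text{poly}(|G|)$, replaced by Frederickson's $\mathcal O(1)$-overhead heap selection as noted in \autoref{sec:subproblems}). Summing, the total cost is $\mathcal O(|G|) + \mathcal O(k \log |G|)$ for both time and space, which is the claimed bound; correctness of the enumeration follows because, as argued in \autoref{sec:subproblems}, each feasible solution other than the global best appears exactly once as the second-best solution of some node of the subproblem tree, and the tree is a min-heap by second-best value so best-first search finds the $k{-}1$ smallest such values in order.

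The main obstacle I expect is not any single estimate but verifying that path copying is legitimately applicable here: one must check that both the pivot-query and the update procedures of the previous section are pure top-down root-to-leaf computations that never consult parent pointers or sibling versions, and that the solution-ID compression tables stored at each node are themselves copied (not mutated) when a node is copied, so that different persistent versions do not silently share mutable state. Once this is in place, the propagation of updated second-best values up the single copied path is exactly a re-run of the CM recurrence from \autoref{form:subeval} with $+_2$ and $\text{min}_2$, which the previous section already showed maintains the invariant that a fixed optimal solution survives as \emph{the} optimal solution of the child subproblem, so no solution is ever double-counted across the best-first search.
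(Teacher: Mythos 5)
Your proposal is correct and follows essentially the same route as the paper's own proof: a linear-time shallow tree decomposition converted to a full parse tree of logarithmic depth, the CM algorithm with the $\langle \mathcal R, +_2, \mintwo, (0,0), (\infty,\infty)\rangle$ evaluation structure to initialize the root subproblem, and then a best-first search over the binary subproblem tree using the $\mathcal O(d)$ pivot-query and update operations made persistent by path copying. Your additional care about the priority-queue overhead (and the fallback to Frederickson's heap selection) only makes explicit what the paper already delegates to \autoref{sec:subproblems}.
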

\begin{proof}
In linear time, we compute a shallow tree decomposition, transform it into a full parse tree $T$ of depth $d \in \mathcal O(\log |G|)$, and apply the CM algorithm to $T$ according to \autoref{lemma:twoval}.
With the results, we initialize a binary subproblem tree.
We perform a best-first search to find the $k$ best subproblems with respect to their second-best solution as described in \autoref{sec:subproblems}, which requires us to solve $\mathcal O(k)$ subproblems.
Using the persistent tree technique from \autoref{sec:persistence} in conjunction with the update operation above, we need $\mathcal O(d)$ additional time and space per subproblem.
\end{proof}

\section{Fixed numbers of solutions}
We generalize here our algorithm for the second best solution to any fixed number of solutions, and to low-space and parallel complexity classes. We note that similar results could be obtained by expressing the $k$-best solution problem (for a constant $k$) as a formula in MSO with $k$ set variables, all constrained to be solutions, to be different from each other, and with minimum total weight; however, this would give a significantly worse dependence on $k$ than the solution we describe. It would also require additional complication to recover the differences between the solutions (needed for the algorithm in the previous section for non-constant values of~$k$).

Now let $k \in \mathbb N$ be fixed.
The operators $\mink$ and $+_k$ can be defined analogously to the operators $\mintwo$ and $+_2$, respectively.
Evaluating the terms $\mink(x, y)$ and $x +_k y$ for $k$-tuples $x, y \in \mathbb R^k$ still requires constant time.
Hence, the CM algorithm in conjunction with the evaluation structure $\langle (\mathbb R \cup \infty)^k, +_k, \mink, (0, \ldots, 0), (\infty, \ldots, \infty) \rangle$ solves $k$-val$(P)$ for any problem $P$ in LinCMS.
Let $T$ be a parse tree with depth $d$, and let $\le_\text{post}$ be a postordering of $T$.
To solve $k$-val$(P)$, we propose to evaluate the nodes of $T$ according to $\le_\text{post}$.
As soon as a parse tree node has been evaluated, the evaluations of its child node can be dropped.
This way, the number of nodes we store evaluations for never exceeds $d + 1$.

\begin{theorem}
Let $P$ be a LinCMS problem, and let $k, w \in \mathbb N$ be fixed.
Given a graph $G$ with treewidth $w$, the problem $k$-val$(P)$ on $G$ can be solved using logarithmic memory space.
\end{theorem}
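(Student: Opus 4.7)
The plan is to run the $k$-tuple variant of the CM algorithm, but to organize its execution as a recursive postorder traversal of a shallow full parse tree in which only the partial evaluations along the current root-to-node path are kept in memory. Because $k$ and $w$ are fixed, each partial evaluation consists of $\mathcal O(1)$ many $k$-tuples in $(\mathbb R \cup \{\infty\})^k$ (one per relevant formula in $\Phi_R$), so a single stack frame has constant size; with at most $d + 1$ frames on the stack at any moment, the working memory is $\mathcal O(d) = \mathcal O(\log |G|)$.

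First I would obtain logspace-addressable access to a full parse tree $T$ of $G$ over $\mathcal G_w$ with depth $d = \mathcal O(\log |G|)$. For this I would invoke the logspace analogue of Bodlaender's theorem of Elberfeld, Jakoby and Tantau, which produces a bounded-width balanced tree decomposition of any fixed-treewidth graph in logarithmic space, and compose it with the depth-preserving local conversion from a bounded-width tree decomposition into a full parse tree using the binary derivatives $\sigma_\alpha'$ and $\theta_{i,j,r}'$ introduced earlier. This lets later phases refer to nodes of $T$, the operators labeling them, and their children using only logarithmically many bits, without ever materializing $T$ in its entirety.

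Next I would carry out the standard recursive postorder traversal on $T$, simulating the CM algorithm with the evaluation structure $\langle (\mathbb R \cup \{\infty\})^k, +_k, \min_k, (0, \ldots, 0), (\infty, \ldots, \infty) \rangle$. On reaching an inner node $u$, the algorithm recurses into the left child, pushes the resulting bundle of $k$-tuples onto the stack, recurses into the right child, pops both bundles, and combines them formula-by-formula via $+_k$ and $\min_k$ as in \autoref{form:subeval} to obtain the bundle for $u$. Leaves are evaluated directly from the constants of $\mathcal G_w$ to which they are associated. When the root is finally processed, its bundle contains the answer to $k$-val$(P)$. Every combination step costs $\mathcal O(1)$ time and space beyond the stack itself, and the traversal control (``came from parent / left child / right child'') can be folded into the stack frames.

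The hard part will be guaranteeing both shallowness and logspace addressability of $T$ simultaneously, since we cannot afford to write $T$ down explicitly; this is what forces the appeal to the Elberfeld--Jakoby--Tantau framework, whose guarantee that balanced decompositions of bounded-treewidth graphs are logspace-constructible is the nontrivial ingredient. Once such access is in place, the rest is routine bookkeeping: the evaluation stack has $\mathcal O(\log |G|)$ frames, each of constant size, and the $+_k$ and $\min_k$ operations on constant-length tuples of numbers use constant space.
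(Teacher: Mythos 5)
Your proposal is correct and follows essentially the same route as the paper: both invoke the Elberfeld--Jakoby--Tantau logspace construction of a shallow tree decomposition and then run the $k$-tuple CM evaluation in a depth-first postorder traversal, keeping only the $\mathcal O(\log |G|)$ constant-size partial evaluations along the current root-to-node path. The only cosmetic difference is that the paper phrases the traversal bag-by-bag (expanding each bag into its fixed-size portion of the parse tree on the fly) rather than node-by-node on the full parse tree, which does not change the argument.
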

\begin{proof}
Elberfeld et al.~\cite{DBLP:conf/focs/ElberfeldJT10} demonstrated that a shallow tree decomposition $T$ can be computed with logarithmic memory space.
Using a depth-first search from an arbitrarily chosen root node, we process the bags of $T$ according to its postordering.
A bag is processed by replacing it with its fixed-size portion of the parse tree, which is then evaluated bottom-up.
The evaluations of all parse tree nodes corresponding to a bag require constant memory space, and we store $\mathcal O(\log |G|)$ of them at a time.
\end{proof}

Further, the CM algorithm can be parallelized as follows.

\begin{theorem}
Let $P$ be a LinCMS problem, and let $k, w \in \mathbb N$ be fixed.
In the EREW PRAM model, given a shallow tree decomposition $T$ of graph $G$ with treewidth $w$, the problem $k$-val$(P)$ on $G$ can be solved in time $\mathcal O(\log |G|)$ by $\mathcal O(|G|)$ processors.
\end{theorem}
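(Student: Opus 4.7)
The plan is to parallelize the CM algorithm bottom-up on a logarithmic-depth parse tree, letting every parse-tree node be evaluated by its own processor and synchronizing the work level by level from the leaves upward. Because we are given a shallow tree decomposition $T$, whose width is $\mathcal O(w)$ and whose depth is $\mathcal O(\log |G|)$, the first step is to convert $T$ into a full parse tree $T'$ over the algebra $\mathcal G_w$. Each bag of $T$ corresponds to a constant-size fragment of hypergraph operators whose size depends only on $w$, so this conversion is purely local and a single processor per bag can assemble its fragment in $\mathcal O(1)$ time. The resulting $T'$ has $\mathcal O(|G|)$ nodes and depth $d \in \mathcal O(\log |G|)$. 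A standard Euler-tour plus list-ranking computation, which runs on an EREW PRAM in $\mathcal O(\log |G|)$ time with $\mathcal O(|G|)$ processors, then labels every node of $T'$ with its height.

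With one processor permanently assigned to each node of $T'$, the plan is to run the CM algorithm equipped with the evaluation structure $\langle (\mathbb R \cup \{\infty\})^k, +_k, \mink, (0,\ldots,0), (\infty,\ldots,\infty) \rangle$ in a level-synchronized fashion: at global step $i \in \{0, 1, \ldots, d\}$, every processor whose node has height $i$ reads the $k$-tuples already stored at its (at most two) children and applies \autoref{form:subeval}, with $+_k$ and $\mink$ in place of $+$ and $\min$, to produce and store its own $k$-tuple. Since $k$, $w$, the number of relevant formulas in $\Phi_R$, and the constant $l$ bounding the number of fitting pairs of child formulas are all fixed, each such step takes $\mathcal O(1)$ time per processor, yielding $\mathcal O(d) = \mathcal O(\log |G|)$ total time and $\mathcal O(|G|)$ total work. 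At the end, the root of $T'$ stores the $k$-tuple that is the answer to $k$-val$(P)$.

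The EREW constraints are satisfied because at every step each processor reads only from the memory cells of its own children---and each child has a unique parent in $T'$, so no cell is read by two processors simultaneously---and writes only to its own cell. The main obstacle is the bookkeeping required to drive the level-synchronized execution without concurrent memory access: building $T'$ from $T$, computing heights, grouping processors by level, and routing each child's output to its parent. All of these reduce to standard parallel tree primitives---Euler tour, list ranking, and segmented prefix sums---which are well known to run within the claimed EREW bounds, so composing them with the constant-time per-node CM update yields the desired $\mathcal O(\log |G|)$-time, $\mathcal O(|G|)$-processor algorithm.
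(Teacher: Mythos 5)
Your proposal is correct and follows essentially the same approach as the paper: one processor per node (the paper assigns one per bag of $T$, each handling the constant-size parse-tree fragment for that bag), bottom-up evaluation where each processor waits for its children, with the EREW condition guaranteed by the fact that each node's output is written only by its own processor and read only by its unique parent. The level-synchronization via Euler tour and list ranking is just a more explicit implementation of the "wait until children finish" protocol that the paper states informally.
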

\begin{proof}
We allocate one processor $p(u)$ for each node $u$ of $T$, which is responsible for computing the portion of the parse tree corresponding to its bag, and for evaluating all nodes of that portion of the parse tree.
The processor $p(u)$ has to wait until all processors $p(v)$ of child nodes $v$ of $u$ have finished.
The processor of the root node of $T$ therefore has to wait $\mathcal O(\log |G|)$ time.
Only processor $p(u)$ writes solutions for node $u$, and only the parent $u'$ of $u$ reads them.
The $p(u')$ idles until $p(u)$ finishes computation, resulting in exclusive read/write access at any time.
\end{proof}

Finally, using the algorithm of Bodlaender~\cite{DBLP:conf/wg/Bodlaender88} on $\mathcal O(|G|^{3w + 4})$ processors to compute a shallow tree decomposition, we obtain the following.

\begin{corollary}
Let $P$ be a LinCMS problem, and let $k, w \in \mathbb N$ be fixed.
In the CRCW PRAM model, given a graph $G$ with treewidth $w$, the problem $k$-val$(P)$ on $G$ can be solved in time $\mathcal O(\log |G|)$ by $\mathcal O(|G|^{3w + 4})$ processors.
\end{corollary}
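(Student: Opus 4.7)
The plan is to compose the preceding theorem with Bodlaender's parallel shallow-tree-decomposition algorithm in a straightforward two-phase pipeline, observing that the total work fits within the stated resource bounds.

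First I would invoke Bodlaender's result~\cite{DBLP:conf/wg/Bodlaender88}, as already referenced in the text just before the corollary: on a PRAM with $\mathcal O(|G|^{3w+4})$ processors, a shallow tree decomposition $T$ of $G$ (width $\mathcal O(w)$, depth $\mathcal O(\log |G|)$, binary) can be produced in $\mathcal O(\log |G|)$ time. Since CRCW is at least as powerful as Bodlaender's model, this phase runs within our processor/time budget. Once $T$ is in shared memory, I would apply the previous theorem verbatim: its EREW procedure evaluates the CM algorithm with the $\langle(\mathbb R\cup\{\infty\})^k,+_k,\mink,\ldots\rangle$ structure in $\mathcal O(\log |G|)$ time using only $\mathcal O(|G|)$ processors, and EREW algorithms run unchanged on CRCW.

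The two phases are sequential, so the total running time is $\mathcal O(\log |G|)+\mathcal O(\log |G|)=\mathcal O(\log |G|)$, and the processor count is dominated by $\mathcal O(|G|^{3w+4})$ from the first phase (with the $\mathcal O(|G|)$ processors of the second phase being absorbed, since $3w+4\ge 4$). This matches the stated bounds.

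The only step that needs a moment's attention is making sure the interface between the two phases is consistent: Bodlaender's output is a tree decomposition of width $\mathcal O(w)$, whereas the second theorem expects a shallow tree decomposition which it then converts to a full parse tree of $G$ in the finite hypergraph algebra $\mathcal G_{\mathcal O(w)}$. This conversion is the per-bag ``replace by its fixed-size portion of the parse tree'' step already used in the previous two proofs; it is purely local to each bag, so it can be performed in $\mathcal O(1)$ time per bag by allocating one processor per bag, staying within the $\mathcal O(|G|)$ processor / $\mathcal O(\log |G|)$ time envelope of phase two. I do not expect any genuine obstacle here; the corollary is essentially the observation that replacing the ``given $T$'' hypothesis of the previous theorem by a parallel construction of $T$ costs only a polynomial blowup in processors and no asymptotic blowup in time.
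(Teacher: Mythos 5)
Your proposal is correct and matches the paper's own argument, which consists of exactly this composition: the sentence preceding the corollary simply invokes Bodlaender's parallel shallow-tree-decomposition algorithm on $\mathcal O(|G|^{3w+4})$ processors and then applies the preceding EREW theorem. Your additional remarks about simulating EREW on CRCW and the per-bag parse-tree conversion are sound but were left implicit in the paper.
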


\subsection*{Acknowledgements}

This research was performed in part during a visit of Denis Kurz to UC Irvine, supported by DFG GRK 1855 (DOTS). The research of David Eppstein was supported in part by NSF grants CCF-1228639, CCF-1618301, and CCF-1616248.

\bibliography{literature.bib}

\end{document}